\newtheorem{theorem}{Theorem}[section]
\newtheorem{remark}{Remark}[section]
\begin{document}

\title{On the Throughput Capacity of Wireless Multi-hop Networks with ALOHA, Node Coloring and CSMA}

\author{
Salman Malik\footnote{INRIA Paris-Rocquencourt, France. Email: \texttt{salman.malik@inria.fr}}, 
Philippe Jacquet\footnote{INRIA Paris-Rocquencourt, France. Email: \texttt{philippe.jacquet@inria.fr}},
Cedric Adjih\footnote{INRIA Paris-Rocquencourt, France. Email: \texttt{cedric.adjih@inria.fr}},
}
\date{}

\maketitle

\begin{abstract}

We quantify the throughput capacity of wireless multi-hop networks with several medium access schemes. We analyze pure ALOHA scheme where simultaneous transmitters are dispatched according to a uniform Poisson distribution and exclusion schemes where simultaneous transmitters are dispatched according to an exclusion rule such as node coloring and carrier sense based schemes. We consider both no-fading and standard Rayleigh fading channel models. Our results show that, under no-fading, slotted ALOHA can achieve at least one-third (or half under Rayleigh fading) of the throughput capacity of node coloring scheme whereas carrier sense based scheme can achieve almost the same throughput capacity as node coloring.  

\end{abstract}

\section{Introduction}
\label{sec:intro}

Measuring the capacity of random wireless multi-hop networks with various medium access protocols (or schemes) remains a very challenging problem. Seminal work of~\cite{Gupta:Kumar} and later studies, {\it e.g.}, \cite{scaling,scaling2} quantify the capacity in terms of asymptotic scaling laws or bounds. These results may not provide detailed insight into the actual performance of various medium access schemes such as exact achievable capacity or protocol design issues such as any trade-offs involving protocol overheads and protocol performance, {\it etc}. Therefore, we evaluate the {\em throughput capacity} of various medium access schemes and our results provide better insights for performance and overhead trade-off and for designing medium access schemes for wireless multi-hop networks.

Medium access schemes can be broadly classified into two main classes: continuous time access and slotted access. Here, our main focus is on slotted medium access although many of our results can be applied to continuous time medium access. Within slotted medium access category, we distinguish slotted ALOHA, node coloring and carrier sense multiple access (CSMA) schemes. The main goal of this article is to quantify the end-to-end throughput capacity of above mentioned medium access schemes in random wireless multi-hop networks and also see how they compare with each other.

\section{Motivation and Related Works}
\label{sec:context}

In one of the first analyses on capacity of medium access schemes in wireless networks, \cite{Nelson:Kleinrock} studied slotted ALOHA under a very simple geometric propagation model. Using a similar propagation model and assuming that all nodes are within range of each other, \cite{CSMA} evaluated CSMA and compared it with slotted ALOHA in terms of throughput. \cite{Bartek} used simulations to analyze CSMA under a realistic signal to interference ratio (SIR) based interference model and compared it with ALOHA, both slotted and un-slotted. For simulations, \cite{Bartek} assumed that transmitters send packets to their assigned receivers which are located at a fixed distance. 

\cite{Weber,Weber2,Weber3} studied the {\em transmission capacity}, which is the maximum number of successful transmissions per unit area at a specified outage probability, of code division multiple access (CDMA) and ALOHA. They assumed that simultaneous transmitters form a homogeneous Poisson point process (PPP) and used the same model for the location of receivers as in~\cite{Bartek}. The fact that the receivers are not a part of the network (node distribution) model and are located at a fixed distance from the transmitters is a simplification. An accurate model of wireless networks should consider that the transmitters, transmit to receivers which are randomly located in the network. Another related work is \cite{Haenggi} which analyzed local (single-hop) throughput and capacity with slotted ALOHA, in networks with random and deterministic node placement, and time division multiple access (TDMA), in $1D$ line-networks. 

With exclusion schemes, like node coloring or CSMA, correlation between the location of simultaneous transmitters makes it extremely difficult to develop a tractable analytical model. Some of the proposed approaches are as follows. \cite{Guard,Guard2} modeled interferers as PPP and exclude some of the interferers in the guard zone around a receiver. \cite{CSMA-Model,Weber3} used Mat\'ern point process however \cite{Busson} showed that it may lead to an underestimation of the density of simultaneous transmitters and proposed to use Simple Sequential Inhibition ({\em SSI}) or, in case of CSMA, an extension of {\em SSI} called {\em SSI$_k$} point process. But, very few analytical results are available on {\em SSI} and {\em SSI$_k$} point processes and results are usually obtained by simulations.  

It can be noticed that most of the related works are limited to single-hop transmissions only and may not give a realistic view of the actual performance of medium access schemes in wireless multi-hop networks. In this article, we will develop a hybrid model, based on analytical model and Monte Carlo method, to compute the throughput capacity of various medium access schemes in random wireless networks consisting of $N$ nodes. We expect that our results will follow the $O(\sqrt{N/\log N})$ scaling law and will also give additional insight into the constant factors associated with this scaling law. In case of multi-hop networks, some of the related works are as follows. \cite{SR-ALOHA} gave an analysis on the optimal probability of transmission for ALOHA which optimizes the product of simultaneously successful transmissions per unit of space by the average range of each transmission. \cite{Weber4} evaluated the transport capacity of random wireless networks without taking into account any particular routing scheme and under strong assumptions that interferers form a PPP and relays are equally spaced on a straight line between source and destination. 

\section{General Settings}
\label{sec:model}

We consider a random wireless network where $N$ nodes are uniformly distributed on a planar disk of radius $r$. In slotted medium access, at any given slot, simultaneous transmitters in the network are distributed like a set of points, ${\cal S}=\{z_1,z_2,\ldots,z_n$\} where $z_i$ is the location of transmitter $i$. 

Let $\gamma_{ij}$ denote the channel gain from node $i$ to node $j$ such that the received power at node $j$ is $P_i\gamma_{ij}$, where $P_i$ is the transmit power of node $i$. We consider that all nodes use unit transmit power and \mbox{$\gamma_{ij}=F\vert z_i-z_j\vert^{-\alpha}$}, where \mbox{$\alpha>2$} is the attenuation coefficient and $F$ is a random variable of unit mean representing the variation in channel gain because of fading. For the distribution of $F$, we consider the following two cases.
\begin{compactenum}[-]
\item {\em No fading}: $F$ is constant and equal to $1$.
\item {\em Rayleigh fading}: $F$ is exponential with mean equal to $1$. For any packet from node $i$ to node $j$, an independent \mbox{$F=F_{ij,t}$} models the fading on the channel during slot $t$. We assume that this fading factor remains constant over the duration of a slot only.
\end{compactenum}
We also assume that the background noise power is negligible. Therefore, the transmission from node $i$ to node $j$, during slot $t$, is successful only if the following condition is satisfied
$$
\frac{F_{ij,t}\vert  z_i-z_j\vert^{-\alpha}}{I_{j,t}}\geq K~,
$$
where $K$ is the minimum SIR threshold required for successfully receiving the packet and $I_{j,t}$ is the total interference at node $j$, during slot $t$, given by
$$
I_{j,t}=\sum_{k\neq i}\frac{F_{kj,t}}{\vert  z_k-z_j\vert^{\alpha}}~.
$$

\section{Parameters of Interest}
\label{sec:interest}

Our main parameter of interest is the throughput capacity, $\zeta(N)$ which is defined as the expected number of packets delivered to their destinations per slot. The actual throughput capacity depends on the number of nodes, $N$, SIR threshold, $K$, attenuation coefficient, $\alpha$, as well as the following factors:
\begin{compactenum}[-]
\item Medium access scheme employed by the nodes in the network. In \S \ref{sec:models}, we will identify the parameters of various medium access schemes to tune the throughput capacity. 
\item Expected transmission rate of each node, $\Omega_i(N)$. It also depends on the tuning parameters of medium access schemes.
\end{compactenum}

Computing throughput capacity of random wireless network with various medium access schemes is a difficult problem. Here, we will develop an analytical model which we will use with Monte Carlo method to compute the {\em achievable} throughput capacity with various medium access schemes. Note that, we are {\em not} investigating the impact of any routing or queueing strategies. For example, \cite{Stamatiou,Georgiadis} investigated delay minimizing routing and optimal throughput back-pressure routing respectively. In this article, we measure the {\em expected} maximum rate of successful end-to-end transmissions in a random wireless multi-hop network with various medium access schemes and, therefore, our results can compliment the protocol design, {\it e.g.}, designing protocols based on \cite{Stamatiou,Georgiadis}. 

In order to understand the following computation of the throughput capacity in random wireless networks, we assume that the network is operating under these assumptions:
\begin{compactenum}[-]
\item All nodes have an infinite buffer of packets, each destined to a uniformly selected destination. This buffer also carries the routed packets. Therefore, nodes always transmit on the medium when allowed by the medium access scheme employed in the network. As we are only computing the average maximum rate of successful end-to-end transmissions in a random wireless network and routing does not have any impact on the transmission rate of nodes. Therefore, routing protocol is considered beyond the scope of this article.
\item The number of retransmissions per packet are unlimited and we do not take into account any constraints on the expected delay. Therefore, if the random wireless multi-hop network is connected, end-to-end delivery is guaranteed. Note that a random wireless network is connected if there are no isolated nodes in the network as discussed in~\cite{Gupta:Kumar}.
\end{compactenum}

\subsection{Computation of Throughput Capacity}

If $p_{ij}$ is the probability of successful transmission from node $i$ to node $j$, the expected number of transmissions required to deliver a packet from node $i$ to node $j$ is: \mbox{$t_{ij}=1/p_{ij}$}. Note that the value of $t_{ij}$ is an approximation as in case of {\em no fading}, interference may be correlated from one slot to another slot, {\it e.g.}, in case of typical node coloring scheme. However, in our models of medium access schemes, discussed in Section \ref{sec:models}, the state of medium access layer is independent from one slot to another slot and to further minimize the correlation, we also analyze our model under {\em Rayleigh fading}. 

Consequently, the expected minimum number of transmissions required to deliver a packet from node $i$ to node $j$, either directly (single-hop) or over a multi-hop path, is given by
\begin{equation}
m_{ij}=\min_k\Big\{m_{ik}+t_{kj}\Big\}~,~\forall~(i,j)~,
\label{eq:min_tx}
\end{equation}
such that \mbox{$m_{ii}=0$}. 

\begin{theorem}
The throughput capacity of random wireless networks is bounded by,
\begin{equation}
\zeta(N)\leq \frac{(N-1)N\sum_i \Omega_i(N)}{\sum_{ij}m_{ij}}~.
\label{eq:throughput}
\end{equation}
\end{theorem}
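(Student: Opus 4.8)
The plan is to argue by conservation of the scarce resource — link-level transmissions per slot — and to cap throughput by dividing the total transmission budget by the average cost, in transmissions, of carrying one packet from its source to its destination. First I would identify the supply side: per slot the network performs an expected $\sum_i \Omega_i(N)$ transmissions, since $\Omega_i(N)$ is by definition the expected transmission rate of node $i$. This is the only resource that can be spent moving packets toward their destinations, so an upper bound on $\zeta(N)$ should emerge from a supply-versus-demand balance.

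Next I would quantify the per-packet demand. By the definition given in (\ref{eq:min_tx}), $m_{ij}$ is the minimum expected number of transmissions needed to deliver a packet from $i$ to $j$ over any single- or multi-hop route; hence no routing strategy can deliver such a packet using fewer than $m_{ij}$ expected transmissions. Because every node selects its destinations uniformly, all $N(N-1)$ ordered pairs $(i,j)$ with $i\neq j$ are equiprobable (and the diagonal terms $m_{ii}=0$ contribute nothing), so the expected number of transmissions consumed per delivered packet is
$$
\bar m = \frac{1}{N(N-1)}\sum_{ij} m_{ij}.
$$

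The final step is the balance inequality itself. If the network delivers $\zeta(N)$ packets per slot in steady state, then by the uniform destination choice the delivered packets are distributed across source--destination pairs in the same proportions in which they are generated, so the expected transmissions they consume per slot is $\zeta(N)\,\bar m$. This demand cannot exceed the available supply, giving $\zeta(N)\,\bar m \le \sum_i \Omega_i(N)$; substituting $\bar m$ and rearranging yields exactly (\ref{eq:throughput}).

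I expect the main obstacle to be justifying that $m_{ij}$ is a genuine lower bound on the expected transmission cost of the optimal route: one must argue that the per-hop costs $t_{kj}=1/p_{kj}$ add along a path and that the minimization in (\ref{eq:min_tx}) indeed selects the cheapest route. This rests on the slot-to-slot independence noted just before the statement, which makes the per-hop retransmission counts (approximately) independent so that their expectations compose additively. A secondary subtlety is the steady-state claim that the delivered traffic inherits the uniform source--destination distribution of the generated traffic; this is where the infinite-buffer and guaranteed-delivery assumptions enter, ensuring that in expectation every generated packet is eventually delivered and none of the transmission budget is structurally wasted.
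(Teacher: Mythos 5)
Your proposal is correct and follows essentially the same argument as the paper: a conservation balance equating the per-slot transmission supply $\sum_i \Omega_i(N)$ against the per-packet cost $\frac{1}{N(N-1)}\sum_{ij} m_{ij}$ induced by uniform source--destination traffic (the paper merely phrases it over $T$ slots rather than per slot). If anything, your explicit use of the inequality ``demand cannot exceed supply'' is slightly cleaner than the paper's wording, which asserts an equality, but the underlying reasoning is identical.
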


\begin{proof}
During $T$ slots, there are on average $T\sum \Omega_{i}(N)$ packet transmission attempts in the network. By the definition of throughput capacity, the number of successfully delivered packets is $\zeta(N) T$. Since each node sends equal traffic to every other node and there are $N(N-1)$ source-destination pairs, the number of packets delivered from a source node $i$ to a destination node $j$ during $T$ slots is $\frac{\zeta(N) T}{N(N-1)}$. We know that the expected minimum number of transmissions required to deliver a packet from node $i$ to node $j$ is $m_{ij}$. Therefore 
$$
\zeta(N)\frac{\sum_{i,j}m_{ij}}{N(N-1)}T~,
$$
is equal to the expected number of transmission attempts in the network during $T$ slots and should be equal to $T\sum \Omega_{i}(N)$. This completes the proof of \eqref{eq:throughput}.
\end{proof}

The challenge is to compute $p_{ij}$ and $m_{ij}$ for all $(i,j)$ and a satisfactory analytical formulation, with all medium access schemes  discussed in \S \ref{sec:models}, is not feasible. Therefore, we will use Monte Carlo method to compute $p_{ij}$ and $m_{ij}$. If node $i$ is isolated, $m_{ij}$, for all $j$, is equal to infinity and $\zeta(N)$ collapses to zero. Therefore, random network of $N$ nodes must be connected and it is a {\em necessary} condition for the feasibility of throughput capacity~\cite{Gupta:Kumar}. A random network of $N$ nodes is connected if $\Omega_i(N)$ scales as $c_1/\log N$ and, consequently, throughput capacity, $\zeta(N)$, scales as $c_2\sqrt{N/\log N}$, for some \mbox{$c_1>0$} and \mbox{$c_2>0$} depending on the medium access scheme, interference model, {\it etc}. Note that \eqref{eq:throughput} incorporates $c_1$ and $c_2$ and, therefore, their values can also be determined. 

\section{Medium Access Schemes}
\label{sec:models}

In this article, we will investigate throughput capacity of random wireless networks with three medium access schemes: node coloring, CSMA and slotted ALOHA. 

In the following discussion, the set of all nodes in the network is ${\cal N}$.

\subsection{Node Coloring Based Schemes}
\label{sec:tdma}

Node coloring schemes use a managed transmission scheme based on TDMA approach. The aim is to minimize the interference between transmissions that cause packet loss. These protocols assign colors to nodes that correspond to periodic slots, {\it i.e.}, nodes that satisfy a spatial condition, either based on physical distance or distance in terms of number of hops, will be assigned different colors. For example, in order to avoid collisions at receivers, all nodes within $k$ hops are assigned unique colors. Typical value of $k$  is $2$. Examples of node coloring schemes are~\cite{unified,SEEDEX,FPRP,DRAND}. 

In this section, instead of considering any particular scheme, we present a model which ensures that transmitters use an exclusion condition in order to avoid the use of same slot within a certain distance. This exclusion distance is defined in terms of euclidean distance $d$ which may be derived from the distance parameter of a typical TDMA-based protocol. Therefore, a slot cannot be shared within a distance of $d$ or, in other words, nodes transmitting in the same slot shall be located at a distance greater or equal to $d$ from each other.  

Following is a model of node coloring schemes which constructs the set of simultaneous transmitters, ${\cal S}$, in each slot (this is supposed to be done off-line so that transmission patterns periodically recur in each slot).
\begin{compactenum}
\item Initialize ${\cal M}={\cal N}$ and ${\cal S}=\emptyset$.
\item Randomly select a node $i$ from ${\cal M}$ and add it to the set ${\cal S}$, i.e, ${\cal S}={\cal S}\cup\{i\}$. 
\item Remove $i$ from the set ${\cal M}$.
\item Remove all nodes from the set ${\cal M}$ which are at distance less than $d$ from $i$.
\item If set ${\cal M}$ is non-empty, repeat from step $2$.
\end{compactenum}
These steps model a centralized or distributed node coloring scheme which {\em randomly} selects the nodes for coloring while satisfying the constraints of euclidean distance. 

Note that the value of $d$ can be tuned at $d^*(N,K,\alpha)$ which maximizes the throughput capacity with node coloring scheme with $N$ nodes, required SIR threshold of $K$ and attenuation coefficient equal to $\alpha$.

\subsection{CSMA Based Schemes}
\label{sec:csma}

We discussed that node coloring schemes require sophisticated medium access algorithm to coordinate simultaneously active transmitters in each slot. On the other hand, CSMA based schemes are simpler but are more demanding on the physical layer. Before transmitting on the channel, a node verifies if the medium is idle by sensing the signal level. If the detected signal level is below a certain threshold, medium is assumed idle and the node transmits its packet  (with the smallest receive to transmit transition time). Otherwise, it may invoke a random back-off mechanism and wait before attempting a retransmission. CSMA/CD (CSMA with collision detection) and CSMA/CA (CSMA with collision avoidance), which is also used in IEEE 802.11, are the modifications of CSMA for performance improvement. 

We adopt a model of CSMA where nodes contend to access medium at the beginning of each slot. In other words, nodes transmit only after detecting that medium is idle. We assume that nodes defer their transmission by a tiny back-off time, from the beginning of a slot, and abort their transmission if they detect that medium is not idle. We also suppose that detection time and receive to transmit transition times are negligible and, in order to avoid collisions, nodes use randomly selected (but different) back-off times. Therefore, the main effect of back-off times is in the production of a random order of the nodes in competition. We use the following simplified construction of the set ${\cal S}$. 
\begin{compactenum}
\item Initialize ${\cal M}={\cal N}$ and ${\cal S}=\emptyset$.
\item Randomly select a node $i$ from ${\cal M}$ and add it to the set ${\cal S}$, {\it i.e.}, ${\cal S}={\cal S}\cup\{i\}$. 
\item Remove $i$ from the set ${\cal M}$.
\item Remove all nodes from the set ${\cal M}$ which can detect a combined interference signal of power higher than $\theta$ (carrier sense threshold), from all transmitters in the set ${\cal S}$, {\it i.e.}, if 
$
\sum_{i\in\cal{S}}F|z_i-z_j|^{-\alpha}\geq \theta~, 
$ 
remove $j$ from $\cal{M}$. Here, $z_i$ is the position of $i$.
\item If set ${\cal M}$ is non-empty, repeat from step $2$.
\end{compactenum}
Above steps model a CSMA based scheme which requires that transmitters do not detect an interference signal of power equal to or higher than $\theta$, during their back-off periods, before transmitting on the medium. At the end of the construction of set ${\cal S}$, some transmitters may experience interference of signal level higher than $\theta$. However, this behavior is in compliance with a realistic CSMA scheme where nodes, which started their transmissions, or, in other words, are already added to the set ${\cal S}$ do not consider the increase in signal level of interference resulting from later transmitters. 

Note that in case of CSMA, $\theta$ can be tuned at $\theta^*(N,K,\alpha)$ to optimize the throughput capacity under given $N$, $K$ and $\alpha$.

\subsection{Slotted ALOHA Scheme}

In slotted ALOHA scheme, nodes do not use any complicated managed transmission scheduling and transmit their packets independently with a certain medium access probability denoted by $p$, {\it i.e.}, in each slot, each node decides independently whether to transmit or otherwise remain silent. 

The throughput capacity of slotted ALOHA at given values of $N$, $K$ and $\alpha$ can be optimized by tuning $p$ at $p^*(N,K,\alpha)$.

\section{Evaluation and Results}
\label{sec:simulations}

It does not appear feasible to build a satisfactory and tractable analytical formulation to derive the accurate values of $p^*(N,K,\alpha)$, $d^*(N,K,\alpha)$ and $\theta^*(N,K,\alpha)$ which maximize the throughput capacity with slotted ALOHA, node coloring and CSMA based schemes respectively. Therefore, we determine these values using the hybrid model of analytical and Monte Carlo methods. We distribute $N$ nodes uniformly over a planar disk of radius $r$. With all medium access schemes, simultaneous transmitters in each slot, {\it i.e.} the set ${\cal S}$, are elected using the models described in \S \ref{sec:models}. In order to compute the probability of successful transmission from each node to every other node, we consider that nodes employ broadcast transmissions and probability of successful transmission is obtained from the number of received and transmitted packets obtained over $25000$ slots. With this information, throughput capacity, $\zeta(N)$, can be computed using \eqref{eq:min_tx} and \eqref{eq:throughput}.

\begin{figure}[!t]
\centering
\includegraphics[scale=0.98]{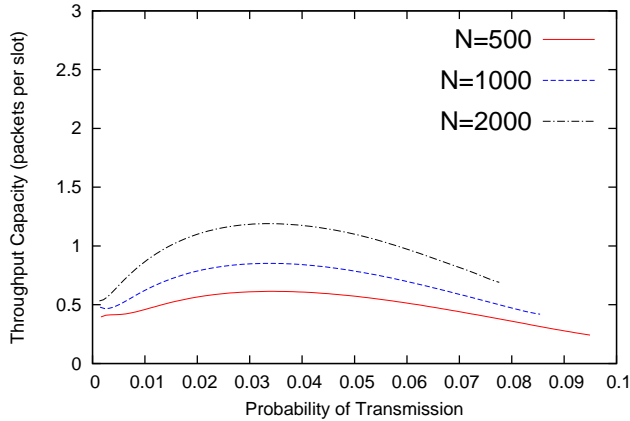}
\caption{Optimization of $\zeta(N)$ with slotted ALOHA scheme under {\em no fading}. $K=20.0$ and $\alpha=4.0$.
\label{fig:optimize_aloha}}
\end{figure}
\begin{figure}[!t]
\centering
\includegraphics[scale=0.98]{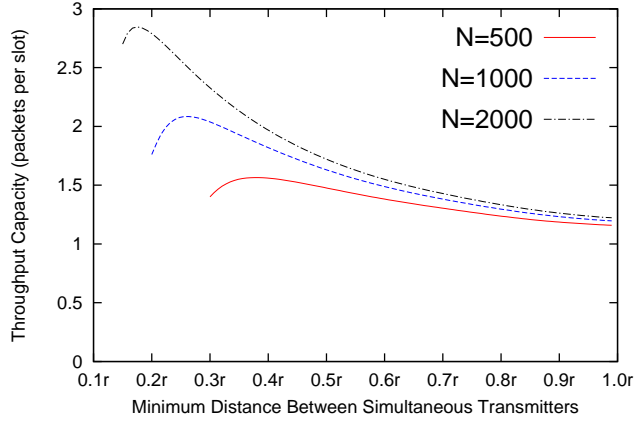}
\caption{Optimization of $\zeta(N)$ with node coloring scheme under {\em no fading}. $K=20.0$ and $\alpha=4.0$.
\label{fig:optimize_tdma}}
\end{figure}
\begin{figure}[!t]
\centering
\includegraphics[scale=0.98]{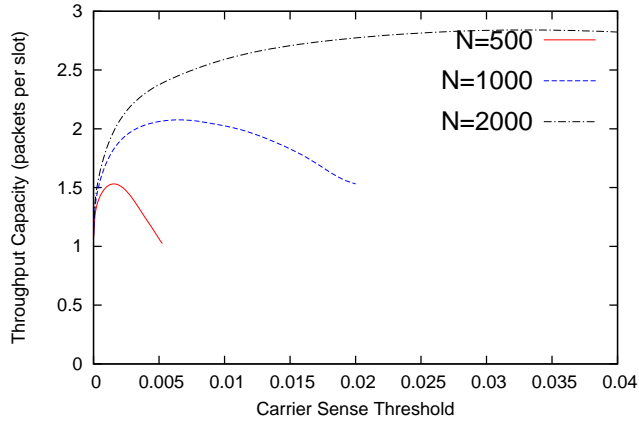}
\caption{Optimization of $\zeta(N)$ with CSMA based scheme under {\em no fading}. $K=20.0$ and $\alpha=4.0$.
\label{fig:optimize_csma}}
\end{figure}

\subsection{Optimization of Throughput Capacity}

In case of slotted ALOHA, we vary the value of probability of transmission, $p$, to determine the optimal $p^*(N,K,\alpha)$ which maximizes $\zeta(N)$ with given values of $N$, SIR threshold, $K$, and attenuation coefficient $\alpha$. Figure \ref{fig:optimize_aloha} shows $\zeta(N)$ with slotted ALOHA averaged over $100$ samples of node distributions. The value of $N$ is varied whereas \mbox{$K=20.0$} and \mbox{$\alpha=4.0$}. It can be observed that as $p$ increases, $\zeta(N)$ increases with rate in $O(\sqrt{N})$ and a maximum occurs at \mbox{$p=p^*(N,20.0,4.0)= c_1/\log N\approx 0.25/\log N$}. Therefore, in a random wireless network, setting the expected transmission rate of each node, $\Omega_i(N)$, equal to $p^*(N,K,\alpha)$ gives the optimal $\zeta(N)$ with slotted ALOHA under given values of $N$, $K$ and $\alpha$. 

Similarly, Fig. \ref{fig:optimize_tdma} and \ref{fig:optimize_csma} show the tuning of node coloring and CSMA based schemes respectively under given values of $N$, \mbox{$K=20.0$} and \mbox{$\alpha=4.0$}. The optimal values of $d^*(N,K,\alpha)$ and $\theta^*(N,K,\alpha)$ can be determined to obtain optimal $\zeta(N)$ with these schemes respectively. Note that, in these cases, $\Omega_i(N)$ is the proportion of slots that each node is expected to be active and transmitting under the given values of $d$ and $\theta$.

\begin{figure*}[!t]
\centering
\subfloat[{\em No fading}.]{
	\includegraphics[scale=0.98]{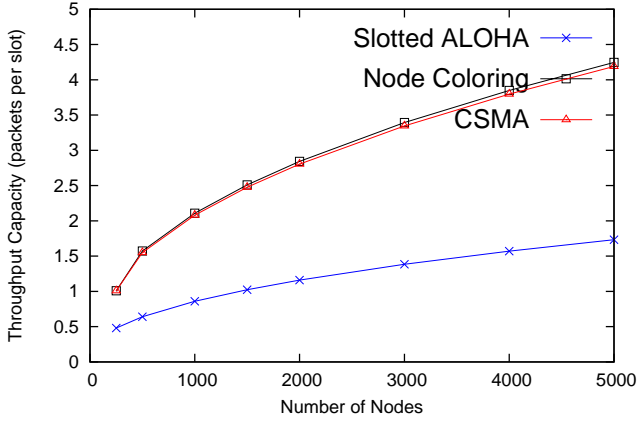}
}
\subfloat[{\em Rayleigh Fading}.]{
	\includegraphics[scale=0.98]{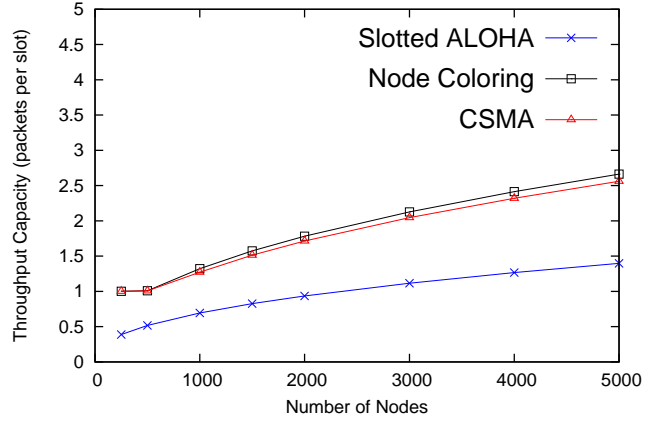}
}
\caption{Optimal $\zeta(N)$ of slotted ALOHA, node coloring and CSMA based schemes. $N$ is varied, $K=20.0$ and $\alpha=4.0$.
\label{fig:comparison_N}}
\end{figure*}
\begin{figure*}[!t]
\centering
\subfloat[{\em No fading}.]{
	\includegraphics[scale=0.98]{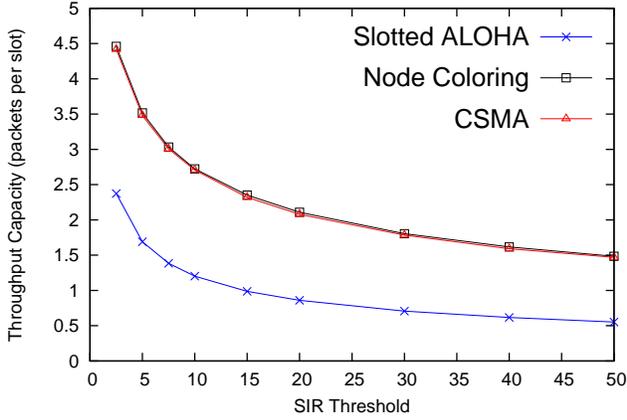}
}
\subfloat[{\em Rayleigh Fading}.]{
	\includegraphics[scale=0.98]{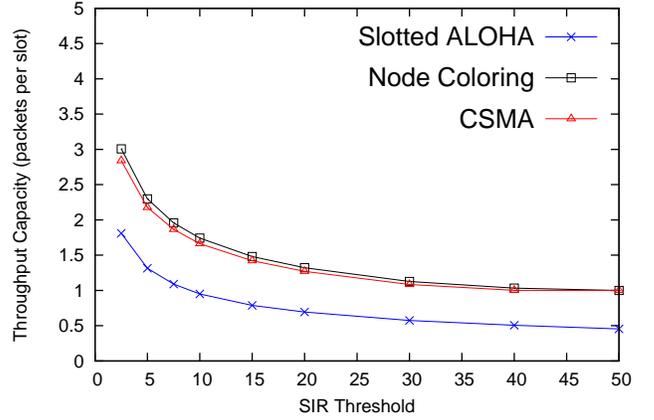}
}
\caption{Optimal $\zeta(1000)$ of slotted ALOHA, node coloring and CSMA based schemes. $N=1000$, $K$ is varied and $\alpha=4.0$.
\label{fig:comparison_K}}
\end{figure*}

\begin{figure*}[!t]
\centering
\subfloat[{\em No fading}.]{
	\includegraphics[scale=0.98]{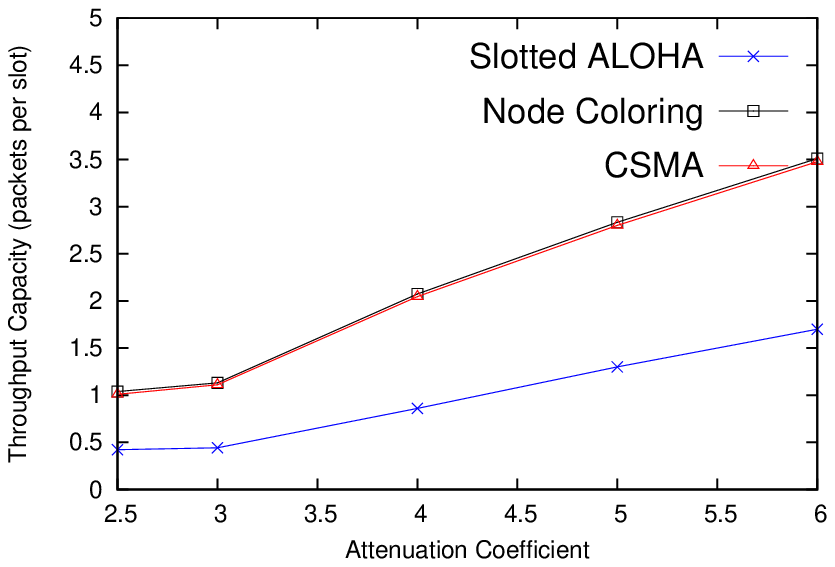}
}
\subfloat[{\em Rayleigh Fading}.]{
	\includegraphics[scale=0.98]{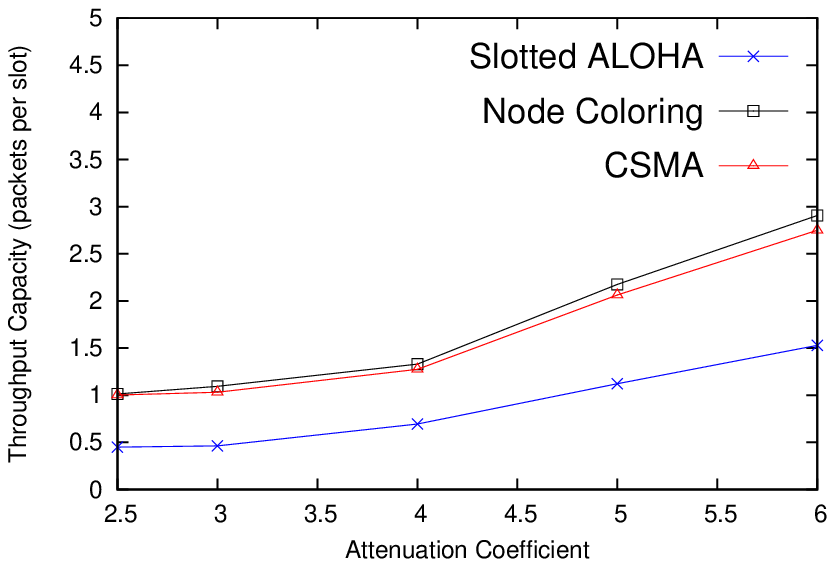}
}
\caption{Optimal $\zeta(1000)$ of slotted ALOHA, node coloring and CSMA based schemes. $N=1000$, $K=20.0$ and $\alpha$ is varied.
\label{fig:comparison_A}}
\end{figure*}

\subsection{Observations}

\begin{remark}
The optimal tuning of all medium access schemes remain insensitive to the fading on the channel. As $N$ increases, the values of $p^*(N,K,\alpha)$, $d^*(N,K,\alpha)$ and $\theta^*(N,K,\alpha)$ are similar under {\em no fading} and {\em Rayleigh fading} of mean $1$. However, under {\em Rayleigh fading} and with low values of $N$, we observed that the parameters of node coloring and CSMA, $d$ and $\theta$ respectively, should be tuned at values such that only $1$ transmitter is active in each slot and therefore achieve higher $\zeta(N)=1$ in this case (see Fig. \ref{fig:comparison_N}(b), $N\leq 500$). As $N$ becomes greater than $500$, $d^*(N,K,\alpha)$ and $\theta^*(N,K,\alpha)$ of node coloring and CSMA are similar under {\em no fading} and {\em Rayleigh fading} channel models.
\end{remark}

\begin{remark}
Fig. \ref{fig:comparison_N} shows $\zeta(N)$ versus $N$ of all schemes at $p^*(N,20.0,4.0)$, $d^*(N,20.0,4.0)$ and $\theta^*(N,20.0,4.0)$ respectively and we can see that as $N$ increases, $\zeta(N)$ scales as $c_2\sqrt{N/\log N}$. In case of slotted ALOHA, under {\em no fading} channel model and \mbox{$K=20.0$} and \mbox{$\alpha=4.0$}, $c_2\approx 0.0715$. Similarly, the constants $c_1$ and $c_2$ can also be determined for node coloring and CSMA based schemes. 
\end{remark}

\begin{remark}
Figures \ref{fig:comparison_K}(a) and \ref{fig:comparison_A}(a) show that under {\em no fading} channel model, slotted ALOHA can achieve {\em at least} one-third or more of the throughput capacity of node coloring and CSMA based schemes. However, under {\em Rayleigh fading} of mean $1$, {\it i.e.} Fig. \ref{fig:comparison_K}(b) and \ref{fig:comparison_A}(b), this factor improves to one-half or even less. From Fig. \ref{fig:comparison_K} and \ref{fig:comparison_A}, we can see that {\em Rayleigh fading} of mean $1$ reduces $\zeta(1000)$ with slotted ALOHA by \mbox{$10\sim25\%$} when compared with {\em no fading} case. In contrast, fading reduces $\zeta(1000)$ of node coloring and CSMA by approximately $35\%$. 
\end{remark}

\begin{remark}
Figures \ref{fig:comparison_N}, \ref{fig:comparison_K} and \ref{fig:comparison_A} show that throughput capacity of CSMA based scheme is slightly lower than node coloring scheme. The reason is that CSMA uses exclusion rule base on carrier sense and this results in lower density (packing) of simultaneous transmitters as compared to node coloring scheme. This can also be observed from the comparison of results on packing densities with $SSI$ and $SSI_k$ point processes in \cite{Busson}. Moreover, under {\em Rayleigh fading} of mean $1$, this difference is exacerbated but by a very small margin.
\end{remark}

\section{Conclusions}
\label{sec:conclude}

We evaluated throughput capacity of random wireless networks with various medium access schemes. Because of the lack of any satisfactory analytical model, we used a hybrid of analytical and Monte Carlo methods to evaluate slotted ALOHA, node coloring and CSMA based schemes. Our results show that compared to slotted ALOHA, node coloring can increase throughput capacity by a factor of $3$ (or less). However, this factor reduces to only $2$ (or less) under more realistic channel model with Rayleigh fading. Moreover, CSMA can achieve almost similar capacity as node coloring. 

The conclusion of this work is that improvements above slotted ALOHA are limited in performance and may be costly in terms of protocol overheads, {\it e.g.}, node coloring schemes which may require significant protocol overhead. Our results show that incorrect tuning of medium access schemes under given system parameters can lead to significantly degraded performance. Therefore, efforts should be diverted to optimizing existing simpler medium access schemes and also to designing efficient routing strategies. Our results are also relevant when nodes move according to an i.i.d. mobility process such that, at any time, the distribution of nodes in the network is homogeneous. However, designing medium access schemes for mobile networks, {\it e.g.}, node coloring schemes may require additional overhead. In the sequel, we will investigate the optimal medium access scheme in wireless networks and we will also compare it with the schemes discussed here.  

\bibliographystyle{hieeetr}
\bibliography{multihop_capacity_arxiv}

\end{document}